\definecolor{celadon}{rgb}{0.67, 0.82, 0.59}
\definecolor{cblue}{rgb}{0.6, 0.73, 0.89}
\theoremstyle{remark}
\newtheorem{thm}{Theorem}
\newtheorem{lem}{Lemma}
\newtheorem{defi}{Definition}
\newtheorem{ex}{Example}
\begin{document}
		\begin{spacing}{0.95}
	
	\title{An Explicit Formula for the\\Zero-Error Feedback Capacity of a \\Class of Finite-State Additive Noise Channels}
		\author{Amir Saberi\textsuperscript{ a}, Farhad Farokhi\textsuperscript{ a,b} and Girish N.~Nair\textsuperscript{ a}}
	\author{{Amir Saberi, Farhad Farokhi and Girish N.~Nair}
	\thanks{The authors are with the Department of Electrical and Electronic Engineering, University of Melbourne, VIC 3010, Australia  (e-mails: asaberi@student.unimelb.edu.au, \{ffarokhi, gnair\}@unimelb.edu.au). F. Farokhi is also with the CSIRO's Data61 (e-mail: farhad.farokhi @data61.csiro.au). $ \qquad  \qquad \qquad \qquad \qquad \qquad \qquad \qquad \qquad \qquad \qquad$
		{\textcopyright} 2020. This manuscript version is made available under the CC-BY-NC-ND 4.0 license http://creativecommons.org/licenses/by-nc-nd/4.0/ }}

	\maketitle
	\thispagestyle{empty}
	
	\begin{abstract}
		It is known that for a discrete channel with correlated additive noise, the ordinary capacity with or without feedback both equal $ \log q-\mathcal{H} (Z) $, where $ \mathcal{H}(Z) $ is the entropy rate of the noise process $ Z $ and $ q $ is the alphabet size. In this paper, a class of finite-state additive noise channels is introduced. It is shown that the zero-error feedback capacity of such channels is either zero or $C_{0f} =\log q -h (Z) $, where $ h (Z) $ is the {\em topological entropy} of the noise process. A topological condition is given when the zero-error capacity is zero, with or without feedback. Moreover, the zero-error capacity without feedback is lower-bounded by $ \log q-2 h (Z) $. We explicitly compute the zero-error feedback capacity for several examples, including channels with isolated errors and a Gilbert-Elliot channel.
	\end{abstract}

	\section{Introduction}
	In his 1956 paper \cite{shannon1956zero}, Shannon introduced the concept of zero-error communication. Although, a general formula is still missing for the zero-error capacity $ C_0 $ of a discrete memoryless channel (DMC) without feedback, Shannon derived one for the zero-error capacity $ C_{0f} $ of a DMC with noiseless feedback. In recent years, there has been progress towards determining $ C_{0f} $ for channels with memory. In \cite{zhao2010zero}, Zhao and Permuter introduced a dynamic programming formulation for computing  $ C_{0f} $ for a finite-state channel modeled as a Markov decision process, assuming state information is available at both encoder and decoder. However, the problem is still open when there is no state information at the decoder.
	
	In this paper, we study the zero-error capacity, with and without feedback, of discrete channels with additive correlated noise. The ordinary capacities with and without feedback of such channels are studied in \cite{alajaji1995feedback}, in which it is proved that
	 \begin{align}
	 C=C_f=\log_2 q-\mathcal{H} (Z),\label{ccf}
	 \end{align}
	where $q$ is the input alphabet size and $\mathcal{H}(Z) $ is the entropy rate of the noise process $ Z $. In this paper, we consider additive noise channels where the noise is generated by a finite-state machine.  We prove a similar formula for the zero-error feedback capacity $ C_{0f} $ and a lower bound for the zero-error capacity $ C_0 $, in terms of {\em topological entropy} (Theorem \ref{thm:zero-error}).  Unlike \cite{zhao2010zero}, we do not assume that channel state information is available at the encoder or decoder. In \cite{saberi2019state}, we studied $ C_0 $ of some special cases of these channels and derived a similar lower bound. In this paper, we extend that result to a more general channel model, and also derive an exact formula for $C_{0f}$.  Examples including the well-known Gilbert-Elliot channel are considered, for which the explicit value of $ C_{0f} $ is computed. To the best of our knowledge, this has not been done for these channels.
	
	The rest of paper is organized as follows. In Section~\ref{sec:model} the channel model and main results are presented. Proofs are given in Sections~\ref{sec:proof} and \ref{sec:proofl} and some examples are discussed in \ref{sec:examps}. Finally, concluding remarks and future extensions are discussed in section~\ref{sec:conclusions}.
	
	Throughout the paper, calligraphic letters such as $ \mathcal{X} $, denote sets. The cardinality of set $ \mathcal{X} $ is denoted by $ |\mathcal{X}| $. The channel input alphabet size is $ q $, logarithms are in base $2$. Random variables are denoted by upper case letters such as $ X $, and their realizations by lower case letters such as $ x $. The vector $ (x_i)_{i=m}^n $ is denoted by $ x_{m:n} $.

	\section{Channel Model and Main Results} \label{sec:model}
	Let the input, output and noise at time $ i\in \mathbb{N} $ in the channel be $ x_i \in \mathcal{X} $, $ y_i \in \mathcal{Y} $, and $ z_i \in \mathcal{Z} $, respectively. Before we describe the channel, we define the following notion. 	
	\begin{defi}[Finite-state machine] \label{def:fsm}
		A {\em finite-state machine} is defined as directed graph $\mathscr{G}=(\mathcal{S},\mathcal{E})$, where the vertex set $ \mathcal{S}=\{0,1,\dots,|\mathcal{S}|-1\} $ denotes {\em states} of the machine, and the edge set $\mathcal{E}\subseteq\mathcal{S}\times \mathcal{S}$ denotes possible transitions between two states. We say a process $(S_i)_{i\geq 1}$ is described by $\mathscr{G}$ if a) there is a positive probability that any state is eventually visited, i.e. $\forall s\in\mathcal{S}$, $\exists i\geq 1$ s.t.  $P(S_i=s)>0$, b) if $(s,s')\in\mathcal{E}$, a transition  $s\to s'$ is always possible for all possible past state sequences, i.e. $P(S_{i+1}=s'|S_i=s, s_{1 :i-1}) > 0$ whenever $P(S_i=s, s_{1 :i-1}) > 0$; and c) conversely, if $(s,s')\notin \mathcal{E}$, then $P(S_{i+1}=s'|S_i=s, s_{1 :i-1}) = 0$ whenever $P(S_i=s, s_{1 :i-1}) > 0$.
	\end{defi}
	
	{\bf Remark:} Processes described by a finite-state machine are {\em topologically Markov} [5, Ch.2], but need not be stochastic Markov chains.
	
	The following channel is studied in this paper.
	\begin{defi}[Finite-state additive noise channels]
		\label{def:fanchannel}	
		A discrete channel with common input, noise and output $ q $-ary alphabet $ \mathcal{X} $ is called {\em finite-state additive noise} if its output at time $ i \in \mathbb{N} $ is obtained by
		\begin{align*}
		Y_i = X_i \oplus Z_i, \, i \in \mathbb{N},
		\end{align*}
		where $ \oplus $ is modulo $ q $ addition and the correlated additive noise $ Z_i $ is governed by a state process $ (S_i) $ on a finite-state machine such that each outgoing edge from a state $ s_i $ corresponds to different values $ z_i $ of the noise. Thus, there are at most $ q $ outgoing edges from each state.
		We assume the state transition diagram of the channel is strongly connected  and that $Z_i$ is independent\footnote{ This can be relaxed to {\em qualitative independence} \cite[Ch.1]{renyi1970foundations}; i.e. if $P(X_{1:k}~=~x_{1:k})$ and $P(Z_k=z_k)$ are both positive, then  $P(X_{1:k}=x_{1:k},Z_k=z_k)>0$.} of $X_{1:i}$.
	\end{defi}
	Figure \ref{fig:examp} shows a noise process which defines a channel that has no more than two consecutive errors. For example, the transition at time $ i $ from state $ S_i=0 $ to itself corresponds to $ Z_i=0 $. Moreover, $ Z_i=1 $ leads to the transition ending in state $ S_{i+1}=1 $ (state at next time step). Note that, in $ S_i=2 $, the noise can only take $ Z_i=0 $ and transits to $ S_{i+1}=0 $. 
	
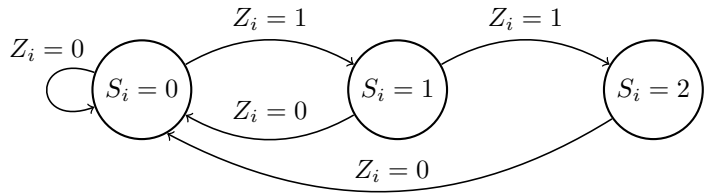
\begin{figure}[t]
	\centering
	\begin{tikzpicture}
	[->, auto, semithick, node distance=3.4cm]
	\tikzstyle{every state}=[fill=white,draw=black,thick,text=black,scale=1]
	\node[state]   (S1)       {$ S_i=0 $};
	\node[state]   (S2)[right of=S1]   {$ S_i=1 $};
	\node[state]   (S3)[right of=S2]   {$ S_i=2 $};
	\path
	(S1) edge[out=160, in=200,looseness=5] node [above=0.25] {$ Z_i=0 $}(S1)
	edge[bend left] node [above] {$ Z_i=1 $}  (S2)
	(S2) edge[bend left]  node [above=0.1] {$ Z_i=0 $}  (S1)
	edge[bend left]  node [above] {$ Z_i=1 $}  (S3)
	(S3) edge[bend left]  node [above] {$ Z_i=0 $}  (S1.300);
\end{tikzpicture}
\caption{State transition diagram of a noise process in a channel at which no more than two consecutive errors can happen in the channel.}
\label{fig:examp}
\end{figure}
\begin{defi}[Coupled graph] \label{def:dg}
	{\em Coupled graph} of a finite-state machine (with labeled graph $ \mathscr{G} $) is defined as a labeled directed graph\footnote{This product is called {\em tensor product},  as well as {\em Kronecker  product} \cite[Ch. 4]{hammack2011handbook}. } $\mathscr{G}_c=\mathscr{G}\times \mathscr{G}$, such that it has vertex set $ V=\mathcal{S}\times \mathcal{S} $ and has an edge from node $ u=(i,j) \in V $ to $ v=(k,m) \in V  $ if and only if there are edges from $ S=i $ to $ S=k $ (with a label value $ E_{ik} $) and from $ S=j $ to $ S=m $ (with a label value $ E_{jm} $) in $ \mathscr{G} $, each edge has a label equal to $E_{ik} \ominus E_{jm}$, where $ \ominus $ is modulo $ q $ subtraction.
\end{defi}
For a state-dependent channel, the zero-error capacity is defined as follows.
\begin{defi} \label{def:c0}
	The zero-error capacity, $C_0$, is the largest block-coding rate that permits zero decoding errors, i.e., 
	\begin{align}
	C_0:=\sup_{n \in \mathbb{N}, \, \mathcal{F} \in \mathscr{F}}
	\frac{\log |\mathcal{F}|}{n}, \label{c0def}
	\end{align}
	where $ \mathscr{F}\subseteq \mathcal{X}^{n}$  is the set of all block codes of length $ n $ that yield zero decoding errors for any channel noise sequence and channel initial state, such that no state information is available at the encoder and decoder. In a zero-error code, any two distinct codewords $ x_{1:n},\, x'_{1:n} \in \mathscr{F} $ can never result in the same channel output sequence, regardless of the channel noise and initial state.
\end{defi}	
The zero-error feedback capacity $C_{0f}$ is defined in the presence of a noiseless feedback from the output. In other words, assuming $ m \in \mathcal{M} $ is the message to be sent and  $ y_{1:n} $ is the output sequence received then $ x_i(m)=f_{m,i}(y_{1:i-1}), \, i=1,\dots, n, $ where $ f_{m,i} $ is the encoding function. Let the family of encoding functions $ \mathcal{F_{\mathcal{M}}}=\{f_{m,n}: m \in \mathcal{M}\} $. The zero-error feedback capacity, is the largest block-coding rate that permits zero decoding errors.

Before, presenting the main results, we need some preliminaries from symbolic dynamics.
In symbolic dynamics, topological entropy is defined as the asymptotic growth rate of the number of possible  state sequences. For a finite-state machine with an irreducible transition matrix $\mathcal{A}$, the topological entropy $ h $ is known to coincide with $\log\lambda$, where $\lambda$ is the {\em Perron value} of $\mathcal{A}$~\cite{lind1995introduction}. This is essentially due to the fact that the number of the paths from state $ S=i $ to $ S=j $ in $ n $ steps is the $ (i+1,j+1) $-th element of $ \mathcal{A}^n$, which grows at the rate of $\lambda^n$ for large $n$.

First we give a topological condition on when zero-error capacity is zero, with or without feedback.
\begin{thm}\label{thm:c0f0}
	The zero-error capacity with(out) feedback $ C_{0f} \,$(resp. $ C_0 $) of a finite-state additive noise channel [Def. \ref{def:fanchannel}] having finite-state machine [Def. \ref{def:fsm}] graph  $\mathscr{G}=(\mathcal{S},\mathcal{E}) $  is zero, if and only if $ \forall \, d_{1:n} \in \mathcal{X}^{n} , n \in \mathbb{N} $, there exists a walk on the {\em coupled graph} [Def. \ref{def:dg}] of $ \mathscr{G} $ with the label sequence $ d_{1:n} $. 
\end{thm}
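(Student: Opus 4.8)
The plan is to recast the statement as a combinatorial property of the coupled graph and then handle the two implications separately. First I would introduce the set of \emph{confusable differences} $D_n\subseteq\mathcal{X}^n$, defined as the set of all label sequences $d_{1:n}$ carried by some length-$n$ walk on $\mathscr{G}_c$ [Def.~\ref{def:dg}]. Because the channel is additive and the noise is independent of the input, two codewords $x_{1:n},x'_{1:n}$ can be confused — i.e.\ produce a common output for some admissible noise realizations and initial states — if and only if $x_{1:n}\ominus x'_{1:n}\in D_n$: the equation $x_{1:n}\oplus z_{1:n}=x'_{1:n}\oplus z'_{1:n}$ is equivalent to $z_{1:n}\ominus z'_{1:n}=x'_{1:n}\ominus x_{1:n}$, and $z_{1:n}\ominus z'_{1:n}$ ranges exactly over the labels of the walks of $\mathscr{G}_c$ as the two noise paths (hence the two initial states) vary. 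Two facts about $D_n$ are used repeatedly: $0\in D_n$ (the diagonal walk $z=z'$ along any length-$n$ path of $\mathscr{G}$, which exists by strong connectivity) and $D_n=-D_n$ (swapping coordinates of a coupled walk negates every label). With this dictionary, the hypothesis ``$\forall d_{1:n}\in\mathcal{X}^n,\ \exists$ walk with label $d_{1:n}$'' is precisely the statement $D_n=\mathcal{X}^n$ for all $n$.

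For ``$C_{0f}=0\Rightarrow$ condition'' I would argue the contrapositive, and since $C_{0f}\ge C_0$ it suffices to produce a \emph{no-feedback} code. If the condition fails there is a blocklength $n_0$ and a word $d^\ast\in\mathcal{X}^{n_0}\setminus D_{n_0}$; then the length-$n_0$ codewords $0$ and $d^\ast$ have difference $d^\ast\notin D_{n_0}$ and are distinguishable for every initial state and noise sequence. I would boost this pair by concatenation: consider the $2^k$ codewords of length $kn_0$ obtained by juxtaposing $k$ blocks each chosen from $\{0,d^\ast\}$. Two distinct such codewords differ in some block, where the blockwise difference is $\pm d^\ast$; since the $j$-th block of any length-$kn_0$ walk on $\mathscr{G}_c$ is itself a length-$n_0$ walk, its block-$j$ label lies in $D_{n_0}$, whereas $\pm d^\ast\notin D_{n_0}$ (using $D_{n_0}=-D_{n_0}$). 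Hence the full difference avoids $D_{kn_0}$, the $2^k$ codewords are pairwise distinguishable, and $C_0\ge\tfrac{1}{n_0}\log 2>0$, so $C_{0f}>0$.

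The harder direction is ``condition $\Rightarrow C_{0f}=0$'', where feedback must be confronted. I would first reduce $C_{0f}=0$ to the assertion that \emph{no two messages can be separated with zero error by any feedback strategy, for any blocklength}: a scheme separating two messages in $n$ steps for every initial state and noise path concatenates block-by-block (the decoder being certain after each block irrespective of the entering state), giving rate $1/n>0$. It then suffices to show that, assuming $D_n=\mathcal{X}^n$ for all $n$, any two adaptive encoders $f_1,f_2$ are confusable. This is an adversarial path construction on $\mathscr{G}_c$: nature must build a coupled walk producing a single output string $y_{1:n}$ using, at each step $i$, an edge whose label equals $d_i=f_{2,i}(y_{1:i-1})\ominus f_{1,i}(y_{1:i-1})$. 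The obstacle is exactly the adaptivity: although \emph{some} walk realizes any prescribed label sequence, $d_i$ is revealed only after the earlier outputs, which themselves depend on nature's previous edge choices through $y_i=f_{1,i}(y_{1:i-1})\oplus z_i$, so nature cannot fix a walk in advance and must guarantee the required label is available from wherever it currently sits.

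To overcome this I expect to need a closed, ``label-complete'' region of the coupled graph: a nonempty set $U$ of coupled vertices such that from every $v\in U$ and every label $d\in\mathcal{X}$ there is an edge labeled $d$ returning to $U$. If such a reachable $U$ exists, nature steers into it (along the diagonal and an all-zero prefix, using strong connectivity) and thereafter always meets the encoder-dictated constraint, so by K\"onig's lemma a confusing walk of every length exists. The crux is therefore to extract $U$ from the hypothesis $D_n=\mathcal{X}^n$; I would obtain it via a subset (determinization) construction on $\mathscr{G}_c$ together with the special symmetric, diagonal-containing structure of $\mathscr{G}_c=\mathscr{G}\times\mathscr{G}$ — mere fullness of the label language need not force a label-complete vertex, but the coupled-graph structure plus strong connectivity should. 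Establishing the existence and reachability of this $U$ is the step I expect to be the main difficulty.
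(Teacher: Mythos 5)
Your reduction of confusability to the label sets $D_n$ of the coupled graph is correct (two codewords collide for some admissible initial states and noise paths iff their difference lies in $D_n$, because the initial coupled vertex is adversarial under Definition~\ref{def:c0}), and your direction ``condition fails $\Rightarrow C_0>0$'' is sound and is essentially the paper's necessity argument: exhibit $d^\ast\notin D_{n_0}$ and use the codeword pair $\{0,d^\ast\}$. The concatenation to $2^k$ codewords is valid but unnecessary, since the $\sup$ over blocklengths in \eqref{c0def} already gives $C_0\ge 1/n_0>0$ from a single two-word code, which is all the paper uses.

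The genuine gap is in the direction ``condition $\Rightarrow C_{0f}=0$,'' and it is one you name yourself. You correctly isolate the adaptivity obstacle (the demand $d_i$ depends on $y_{1:i-1}$, hence on nature's earlier edge choices, so a walk cannot be fixed in advance), and you propose to defeat it via a reachable, closed, label-complete set $U$ of coupled vertices. But the existence of such a $U$ is exactly the substantive content of this implication, and you explicitly defer it (``the step I expect to be the main difficulty''), even noting that fullness of the label language alone does not force a label-complete vertex in a general labeled digraph. As written, the forward implication is therefore not proved. (A secondary wrinkle: your plan to ``steer into $U$ along the diagonal with an all-zero prefix'' does not work as stated, since during that prefix nature must still meet the encoder-dictated demands, which need not be zero; however, this is moot because nature may simply choose the initial coupled vertex inside $U$, the initial states being adversarial.) For comparison, the paper does not introduce any such $U$: it constructs the confusing coupled walk greedily, one edge at a time, computing $d_i=f_{m',i}(y_{1:i-1})\ominus f_{m,i}(y_{1:i-1})$ from the common output prefix already built and extending by an edge labeled $d_i$, invoking the hypothesis that every label sequence is realized by some walk. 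The paper is itself terse about why the needed edge exists \emph{from the current coupled vertex} rather than merely somewhere in $\mathscr{G}_c$ --- precisely the issue you flag --- but its route is a direct induction rather than a determinization argument. To complete your proposal you must actually establish the existence (and nonemptiness) of $U$ from the hypothesis $D_n=\mathcal{X}^n$ for all $n$, exploiting the product structure of $\mathscr{G}_c$ and strong connectivity of $\mathscr{G}$, or replace it with a finite backward-induction/K\"onig-type argument over the coupled vertices showing nature can always meet the adaptively revealed demands.
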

\textbf{Remark:} This result implies that $ C_0 = 0 $ if and only if $ C_{0f} = 0 $ for finite-state additive noise channels.
\begin{proof}	
	\textit{Sufficiency:} We show that for any choice of encoding functions and blocklength $ n $ there is a common output for $ m,m' \in \mathcal{M}$, i.e., $ \exists z_{1:n},z'_{1:n} $ such that the output sequences, $ y_{1:n}=y'_{1:n} $, where $ y_{1:n}=f_{m,1:n}\oplus z_{1:n}, \,y'_{1:n}=f_{m',1:n}\oplus z'_{1:n} $. In other words, $\forall \, n \in \mathbb{N}$, and  
\[ d_{1:n}:=f_{m',1:n}(z'_{1:n-1}) \ominus f_{m,1:n}(z_{1:n-1})\, \in \mathcal{X}^n,\]
$\exists\, z_{1:n},z'_{1:n}$ such that $d_{1:n}(z_{1:n-1},z_{1:n-1})= z_{1:n} \ominus z'_{1:n} $.

First observe that having current states $ S_i=s$ and $S'_i=s' $, for two noise sequences of $ z_{1:i-1}$ and $z'_{1:i-1} $, respectively, the label on out-going edges in the coupled graph is belong to $ \{z_i \ominus z'_i |  S_i=s,  S'_i=s' \}$. 
Now consider the first transmission, by choosing any inputs $ f_{m,1}, f_{m',1} \in \mathcal{X}$, if there is an edge from any state $ (k,j) \in V $ with the value $d_1:=f_{m',1}\ominus f_{m,1} \in \mathcal{X}$ then there exist $ z_1,z'_1 \in \mathcal{X} $ that produce a common output for two channel inputs $ f_{m,1} $ and $ f_{m',1} $. By continuing this argument for any $ i\in \mathbb{N} $ having $ y_{1:i-1}=y'_{1:i-1} $, if $ d_i =f_{m',i}(z'_{1:i-1}) \ominus f_{m,i}(z_{1:i-1}) \in \mathcal{X} $ is chosen such that there is an edge with value $ d_i $ then there is an output shared with two messages. In other words, by choosing any value for $ d_i $, if there is an edge with corresponding value it means there is a pair of noise values $ (z_i,z'_i) $ such that $ d_i=z_i \ominus z'_i $, therefore $ y_i=y'_i $. If there is no such an edge for a particular $ d_i $, then there is no pair of noise values that produces the same output, and thus, $  y_i \neq y'_i $.

Therefore, if $ \forall n \in \mathbb{N} $ and for any choice of $ d_{1:n} \in \mathcal{X}^n $ there is a walk on the coupled graph then the corresponding noise sequences of the walk can produce the same output, i.e. $ y_{1:n}=y'_{1:n} $ which implies $ C_{0f}=0 $ and therefore $ C_0=0 $.\\

\textit{Necessity:} Assume there is no walk for a sequence of $ d_{1:n} $ then by choosing any two input sequences $x_{1:n}, x'_{1:n}$ such that $ x_{1:n}\ominus x'_{1:n}=d_{1:n} $, two messages $ m $ and $ m' $ can be transmitted with zero-error which contradict with the assumption that $ C_{0}=0 $ (and also $ C_{0f}=0 $).
\end{proof}
We now relate the zero-error capacities of the channel to the noise process topological entropy.

\begin{thm}\label{thm:zero-error}
	The zero-error feedback capacity of the finite-state additive noise channel [Def. \ref{def:fanchannel}] with topological entropy $h(Z)$ of the noise process where no state information is available at the transmitter and decoder is either zero or 
	\begin{align}
	C_{0f} &= \log q-h(Z).\label{c0f}
	\end{align}
	Moreover, the zero-error capacity (without feedback) is lower bounded by
		\begin{align}
	C_{0} &\geq \log q - 2h(Z).\label{c0l}
	\end{align}
\end{thm}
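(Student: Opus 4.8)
The plan is to prove the feedback formula by sandwiching $C_{0f}$ between a converse and an achievability bound, to dispose of the degenerate case via Theorem~\ref{thm:c0f0}, and to obtain the no-feedback lower bound by greedy packing on the coupled graph. Throughout I write $\lambda$ for the Perron value of $\mathcal{A}$, so $h(Z)=\log\lambda$, and I use the following counting fact: since distinct outgoing edges carry distinct noise values, a walk from a fixed state $s_0$ is in bijection with its noise label sequence; hence by irreducibility the number $N_n(s_0)$ of length-$n$ noise sequences from $s_0$ satisfies $\log N_n(s_0)=nh(Z)+O(1)$.

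\textbf{Converse $C_{0f}\le \log q-h(Z)$.} Fix any feedback code of blocklength $n$ and a single initial state $s_0$. For a fixed message $m$, the map sending a valid noise sequence $z_{1:n}$ to the output $y_{1:n}$ is injective: reading $y_{1:n}$ left to right recovers each $x_i=f_{m,i}(y_{1:i-1})$ and hence $z_i=y_i\ominus x_i$. Thus $m$ generates exactly $N_n(s_0)$ distinct outputs, and zero-error forces the output sets of distinct messages to be disjoint even for this single $s_0$. Therefore $|\mathcal{M}|\,N_n(s_0)\le q^n$, giving $\tfrac1n\log|\mathcal{M}|\le \log q-h(Z)+O(1/n)$.

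\textbf{Achievability $C_{0f}\ge \log q-h(Z)$ (case $h(Z)<\log q$).} The structural key is that feedback lets the encoder compute the realized noise $z_{1:i-1}=y_{1:i-1}\ominus x_{1:i-1}$, knowing its own past inputs. I would use a recursive ambiguity-resolution scheme: to send one of $M\le q^n$ messages, first transmit a raw length-$n$ representation $x_{1:n}$; the decoder, seeing $y_{1:n}$, narrows the transmitted sequence to $y_{1:n}\ominus\{\text{valid noise sequences}\}$, a set of size at most $K_n$ with $\log K_n=nh(Z)+O(1)$. The encoder knows the true element, so it recursively transmits the index of the correct candidate over the same channel with feedback. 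With $\alpha:=h(Z)/\log q<1$, the recursion $T(q^n)\le n+T(K_n)$ and $K_n\approx q^{n\alpha}$ telescopes to $T(q^n)\lesssim n/(1-\alpha)$, yielding rate $(1-\alpha)\log q=\log q-h(Z)$. Combined with the converse this gives $C_{0f}=\log q-h(Z)$; when $h(Z)=\log q$ the scheme degenerates ($\alpha=1$) and Theorem~\ref{thm:c0f0} identifies $C_{0f}=0$, producing the stated dichotomy.

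\textbf{No-feedback bound $C_0\ge \log q-2h(Z)$.} Two codewords $x_{1:n},x'_{1:n}$ are confusable iff $x_{1:n}\ominus x'_{1:n}$ is the label of a walk on the coupled graph; call this label set $D_n$. Since the coupled graph has adjacency matrix $\mathcal{A}\otimes\mathcal{A}$ with Perron value $\lambda^2$, the number of its length-$n$ walks, and hence $|D_n|$, is at most $\mathrm{poly}(n)\,\lambda^{2n}$. As $D_n$ is symmetric and contains $0$, a greedy (Gilbert--Varshamov) selection produces $\mathcal{F}\subseteq\mathcal{X}^n$ with all pairwise differences outside $D_n$ and $|\mathcal{F}|\ge q^n/|D_n|$, giving rate $\ge \log q-2h(Z)-o(1)$.

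I expect the achievability recursion to be the main obstacle: I must bound the residual candidate set uniformly over the unknown initial state, track the channel state correctly as the scheme passes from the raw phase into the successive resolution phases, and control the polynomial and $|\mathcal{S}|$ prefactors so they contribute only $o(n)$ to $\log K_n$ and do not erode the limiting rate; I also need the base case to terminate at a single candidate. By contrast, the converse and the packing bound are essentially counting arguments once the bijections between walks and noise (respectively difference) sequences are in hand.
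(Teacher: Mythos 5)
Your converse and your no-feedback packing bound are essentially the paper's own arguments: the converse is the same disjoint-output-sets count $|\mathcal{M}|\,N_n(s_0)\le q^n$ built on the walk--noise bijection and Perron--Frobenius (the paper's Lemma~\ref{lem:out}), and the packing bound is the same ``each codeword is confusable with at most $O(\lambda^{2n})$ others, so greedily pick $q^n/O(\lambda^{2n})$ of them'' count -- you route it through the tensor-product adjacency matrix $\mathcal{A}\otimes\mathcal{A}$ with spectral radius $\lambda^2$, while the paper composes two applications of the $\beta\lambda^n$ bound (outputs reachable from a codeword, then inputs consistent with each output); these are the same estimate. Your achievability is also the same scheme as the paper's in substance: the paper sends $k$ raw symbols and then resolves the receiver's ambiguity with $n-k$ ``parity'' uses at rate $C_{0f}-\delta$, compressing your telescoping recursion $T(q^n)\le n+T(K_n)$ into a fixed-point equation; unrolling it as you do is arguably cleaner.

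There is, however, one genuine logical error: you tie the degenerate case to $\alpha=1$, i.e.\ $h(Z)=\log q$, claiming the recursion otherwise achieves $\log q-h(Z)$. That is false, and the paper's own Example~\ref{2stat} is a counterexample: for $q=2$ with no two consecutive errors, $h(Z)=\log\frac{1+\sqrt 5}{2}<\log 2$, yet $C_{0f}=0$. The recursion does not self-terminate -- the candidate set has size at least $\alpha\lambda^j\ge 2$ at every stage, so you must eventually finish by distinguishing a bounded number of candidates with an honest zero-error code, and such a code exists only when $C_{0f}>0$ (equivalently, when the walk condition of Theorem~\ref{thm:c0f0} fails, which is \emph{not} equivalent to $h(Z)<\log q$). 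The paper avoids this by explicitly splitting on Theorem~\ref{thm:c0f0} and running the achievability argument only under the standing hypothesis $C_{0f}>0$, which is exactly what licenses the ``rate $C_{0f}-\delta$'' resolution phase. Your proof is repaired by the same move: assume $C_{0f}>0$, use Theorem~\ref{thm:c0f0} to extract a finite-blocklength zero-error code of positive size for the base case (costing $o(n)$ channel uses), and state the dichotomy as ``zero or $\log q-h(Z)$'' with the zero case characterized by Theorem~\ref{thm:c0f0} rather than by $h(Z)=\log q$. The remaining items you flag (uniformity over the unknown initial state via a union over $s_0$, and absorbing the $|\mathcal{S}|$ and constant prefactors into $o(n)$) are handled in the paper exactly as you anticipate and are not obstacles.
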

\textbf{Remarks:} \begin{itemize} \item The zero-error feedback capacity has a similar representation to the ordinary feedback capacity in \eqref{ccf} but with the stochastic noise entropy rate  $ \mathcal{H} (Z) $ replaced with the topological entropy $h(Z)$.
	
	\item The topological entropy can be viewed as the rate at which the noise dynamics generate uncertainty. Intuitively, this uncertainty cannot increase which explains why it appears as a negative term on the right hand side of \eqref{c0f} and \eqref{c0l}. Moreover, the sum of zero-error feedback capacity and the topological entropy is always equal to $ \log q $, meaning that if the noise uncertainty is increased, the same amount will be decreased in the capacity.
	
	\item The result of \eqref{c0f} is an explicit closed-form solution, which is a notable departure from the iterative, dynamic programming solution in \cite{zhao2010zero}.
	
	\item Following Definition \ref{def:fanchannel}, the channel states are not assumed to be Markov, just topologically Markov. Thus the  transition probabilities in the finite-state machine can be time-varying dependent on previous states. In other words, as long as the graphical structure is not changed, the result is valid.
\end{itemize}

\section{Proof of the Zero-error Feedback Capacity} \label{sec:proof}

The conditions on when $ C_{0f}=0 $ is given in Theorem \ref{thm:c0f0}. Here, we consider $ C_{0f}>0 $. Before presenting the rest of the proof, we give the following lemma.
\begin{lem}\label{lem:out}
	For a finite-state additive noise channel with irreducible adjacency matrix, there exist positive constants $ \alpha $ and  $ \beta $ such that, for any input sequence $ x_{1:n} \in \mathcal{X}^n $, the number of all possible outputs 
	\begin{align}
	\alpha \lambda^n \leq |\mathcal{Y}(s_0,x_{1:n}) |=|\mathcal{Z}(s_0,n) | \leq  \beta  \lambda^n , \label{outlam}
	\end{align}
	where $\lambda $ is the Perron value of the adjacency matrix. Moreover, $ \mathcal{Y}(s_0,x_{1:n}) $ and $ \mathcal{Z}(s_0,n) $ are the possible output and noise values for a given initial state $ s_0 $ and input sequence $ x_{1:n} $.
\end{lem}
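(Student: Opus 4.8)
The plan is to first reduce the counting of outputs to the counting of walks on the finite-state machine, and then to control that walk count using Perron--Frobenius theory. Because the channel is additive and $\oplus$ is a group operation, for a fixed input $x_{1:n}$ the map $z_{1:n}\mapsto x_{1:n}\oplus z_{1:n}$ is a bijection of $\mathcal{X}^n$ onto itself. Hence $\mathcal{Y}(s_0,x_{1:n})=x_{1:n}\oplus\mathcal{Z}(s_0,n)$ has exactly the same cardinality as $\mathcal{Z}(s_0,n)$, which already establishes the middle equality in \eqref{outlam} and shows that the count is independent of the particular input sequence. It therefore suffices to bound $|\mathcal{Z}(s_0,n)|$.

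Next I would identify $|\mathcal{Z}(s_0,n)|$ with a walk count. By Definition \ref{def:fanchannel}, the outgoing edges from any state carry distinct noise labels, so each edge is determined by its tail together with the noise value it emits. Consequently the map sending a walk $s_0\to s_1\to\cdots\to s_n$ to its label sequence $z_{1:n}$ is injective, and conversely every admissible noise sequence reconstructs a unique walk by following, at each step, the unique outgoing edge carrying the prescribed label. Thus $|\mathcal{Z}(s_0,n)|$ equals the number of walks of length $n$ starting at $s_0$, i.e.
\[
|\mathcal{Z}(s_0,n)|=\big(\mathcal{A}^n\mathbf{1}\big)_{s_0},
\]
where $\mathcal{A}$ is the (irreducible) adjacency matrix of $\mathscr{G}$ and $\mathbf{1}$ is the all-ones vector.

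Finally I would sandwich $\mathcal{A}^n\mathbf{1}$ between multiples of $\lambda^n$. Since $\mathcal{A}$ is irreducible, Perron--Frobenius theory furnishes a strictly positive right eigenvector $v$ with $\mathcal{A}v=\lambda v$. Choosing $c_1=1/\max_j v_j$ and $c_2=1/\min_j v_j$ gives the componentwise bound $c_1 v\le \mathbf{1}\le c_2 v$, and applying the positivity-preserving operator $\mathcal{A}^n$ yields $c_1\lambda^n v\le \mathcal{A}^n\mathbf{1}\le c_2\lambda^n v$. Reading off the $s_0$-th coordinate and using $\min_j v_j\le v_{s_0}\le\max_j v_j$ produces the claimed bounds with $\alpha=\min_j v_j/\max_j v_j$ and $\beta=\max_j v_j/\min_j v_j$, both strictly positive and independent of $n$, $s_0$ and $x_{1:n}$.

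I expect the one point needing care to be the combinatorial identification in the second step: one must verify that distinctness of the edge labels at each state makes the walk-to-noise correspondence a genuine bijection, so that no two distinct walks collapse to the same noise sequence and the count is exactly $(\mathcal{A}^n\mathbf{1})_{s_0}$. The Perron--Frobenius sandwich is then routine; note in particular that phrasing the bound through the eigenvector comparison avoids any appeal to convergence of $\mathcal{A}^n/\lambda^n$, and so remains valid even when $\mathcal{A}$ is periodic.
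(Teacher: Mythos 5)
Your proposal is correct and follows essentially the same route as the paper: the additive bijection giving $|\mathcal{Y}(s_0,x_{1:n})|=|\mathcal{Z}(s_0,n)|$, the identification of noise sequences with walks so that $|\mathcal{Z}(s_0,n)|=(\mathcal{A}^n\mathbbm{1})_{s_0}$, and a Perron--Frobenius eigenvector sandwich yielding the same constants $\alpha=v_{\min}/v_{\max}$ and $\beta=v_{\max}/v_{\min}$. Your explicit justification that distinct edge labels make the walk-to-noise map a bijection is a slightly more careful rendering of the step the paper states in one sentence, but the argument is the same.
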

\begin{proof}
	The output sequence, $ y_{1:n} $, is a function of input sequence, $ x_{1:n} $, and channel noise, $ z_{1:n} $, which can be represented as the following
	\begin{align}
	y_{1:n} = x_{1:n} \oplus z_{1:n},  \label{adnoise}
	\end{align}
	where $ z_{1:n} \in \mathcal{Z}(s_0,n) $. The set of all output sequences $ \mathcal{Y}(s_0,x_{1:n}) $ can be obtained as $\mathcal{Y}(s_0,x_{1:n})=\{ x_{1:n} \oplus z_{1:n}| z_{1:n} \in \mathcal{Z}(s_0,n)\}$.
	Since for given $ x_{1:n} $, \eqref{adnoise} is bijective, we have the following
	\begin{align}
	|\mathcal{Y}(s_0,x_{1:n})|=| \mathcal{Z}(s_0,n)|. \label{yvsize}
	\end{align}
	
	For a given initial state $s_0\in\mathcal{S}$, define the binary indicator vector $ \zeta \in\{0,1\}^{|\mathcal{S}|} $ consisting of all zeros except for a 1 in the position corresponding to $s_0$; e.g. in Fig.\ref{fig:examp}, if starting from state $ S=0 $, then $\zeta  =[1, 0, 0]$. Observe that since each output of the finite-state additive channel triggers a different state transition, each sequence of state transitions has a one-to-one correspondence to the output sequence, given the input sequence.
	
	The total number of state trajectories after $n$-step starting from state $s_i$ is equal to sum of $i$-th row of $\mathcal{A}^n$ \cite{lind1995introduction}. Hence, because of a one-to-one correspondence between state sequences and output sequences then $|\mathcal{Z}(s_0,n)|=\zeta ^\top \mathcal{A}^n \mathbbm{1}$. 
	
	Next, we show the upper and lower bounds in \eqref{outlam}. According to the Perron-Frobenius Theorem, for an irreducible $|\mathcal{S}|\times |\mathcal{S}|$ matrix $\mathcal{A} $ (or, equivalently, the adjacency matrix for a strongly connected graph), the entries of eigenvector $v \in \mathbb{R}^{|\mathcal{S}|}$ corresponding to $\lambda$ are strictly positive \cite[Thm. 8.8.1]{godsil2001strongly},\cite[Thm. 4.2.3]{lind1995introduction}. Therefore, multiplying $ \mathcal{A} $ by $ \mathcal{A} v=\lambda v $ results in $ \mathcal{A}^n v=\lambda^n v $ for $ n \in \mathbb{N} $.
	Left multiplication by the indicator vector, $ \zeta ^\top $ yields
	\begin{align}
	\zeta ^\top\mathcal{A}^n v = \lambda^n \zeta ^\top v. \label{eigv}
	\end{align}
	Denote minimum and maximum element of vector $ v $ by $ v_{min} $ and $ v_{max} $ respectively.	Hence, considering that all the elements in both sides of \eqref{eigv} are positive, we have
	\begin{align*}
	v_{min}\zeta ^\top\mathcal{A}^n \mathbbm{1}\leq \zeta ^\top\mathcal{A}^n v &\leq v_{max} \lambda^n \zeta ^\top \mathbbm{1} = v_{max} \lambda^n,
	\end{align*}
	where $ \mathbbm{1} $ is all-one column vector. Therefore, dividing by $v_{min}$, we have
	\begin{align}
	|\mathcal{Y}(s_0,x_{1:n})| =\zeta ^\top\mathcal{A}^n &\leq \frac{v_{max}}{v_{min}} \lambda^n = \beta \lambda^n, \label{out_up}
	\end{align}
	where $ \beta:= v_{max}/v_{min} > 0 $.	Moreover, for deriving the lower bound similar to above, we have
	\begin{align*}
	v_{min}\lambda^n \zeta ^\top \mathbbm{1}\leq \zeta ^\top\mathcal{A}^n v &\leq v_{max} \zeta ^\top\mathcal{A}^n \mathbbm{1}  =v_{max} |\mathcal{Y}(s_0,x_{1:n}) |.
	\end{align*}
	Let  $ \alpha:= v_{min}/v_{max}=1/\beta > 0 $, hence $\alpha \lambda^n \leq |\mathcal{Y}(s_0,x_{1:n}) |$ which combining it with \eqref{out_up} results in \eqref{outlam}.
\end{proof}
\subsection{Converse} \label{uppb} 
We prove no coding method can do better than \eqref{c0f}. 

Let $ m \in \mathcal{M} $ be the message to be sent and  $ y_{1:n} $ be the output sequence received such that 
\begin{align*}
y_i=f_{m,i}(y_{1:i-1})\oplus z_i, \, i=1,\dots, n,
\end{align*}
where $ z_{1:n} \in \mathcal{Z}(s_0,n) \in \mathcal{X}^{n}$ is the additive noise and $ f_{m,i} $ the encoding function. Therefore, the output is a function of encoding function and noise sequence, i.e., $ y_{1:n}=\psi(f_{m,1:n},z_{1:n}) $.
We denote all possible outputs $ \Psi (\mathcal{F_{\mathcal{M}}},\mathcal{Z}(s_0,n))=\{y_{1:n}| m \in \mathcal{M}, z_{1:n} \in \mathcal{Z}(s_0,n) \} $, where $ \mathcal{F_{\mathcal{M}}}=\{f_{m,t}: m \in \mathcal{M}\} $ is the family of encoding functions. 

For having a zero-error code any two $ m, m' \in \mathcal{M}, m\neq m'$ and any two $ z_{1:n}, z_{1:n}' \in \mathcal{Z}(s_0,n)$ must result in $ \psi(f_{m,1:n},z_{1:n})\neq \psi(f_{m',1:n},z_{1:n}') $. Note that when $ m = m' $, (even with feedback) at first position that $ z_{1:n} \neq z_{1:n}' $ will result in $ \psi(f_{m,1:n},z_{1:n})\neq \psi(f_{m,1:n},z_{1:n}') $. Therefore, assuming the initial condition is known at both encoder and decoder,
\begin{align*}
|\Psi (\mathcal{F_{\mathcal{M}}},\mathcal{Z}(s_0,n))| =M|\mathcal{Z}(s_0,n)| \leq q^n.
\end{align*}
Therefore, $ M $ is an upper bound on the number of messages that can be transmitted when initial condition is not available. We know that 
$ \alpha \lambda^n| \leq |\mathcal{Z}(s_0,n)| \leq \beta \lambda^n $. Therefore,
\begin{align*}
C_{0f} \leq \sup_{f \in F_{\mathcal{M}}} \frac{\log M}{n}  &\leq \frac{1}{n} \log \frac{q^n}{\alpha \lambda^n} =\log q-\log \lambda -\frac{\log\alpha}{n}  .
\end{align*}
Moreover, $ \lim_{n \rightarrow \infty} \frac{1}{n} \log \alpha = 0 $, which proves the converse in \eqref{c0f}.
\subsection{Achievability} \label{downb}

A coding method is proposed that achieves \eqref{c0f}. Consider a code of length $ n $ such that first $ k<n $ symbols are the data to be transmitted and the rest of $ n-k $ symbols serve as parity check symbols. 

We know that for an input of size $ k $ there are $ |\mathcal{Y}(k)|=|\mathcal{Z}(k)|=|\cup_{s_0} \mathcal{Z}(s_0,k)|$ possible output sequences, which is bounded as follows
\begin{align*}
\alpha \lambda^k \leq |\mathcal{Z}(s_0,k)| \leq |\mathcal{Y}(k)| \leq (|\mathcal{S}| \beta) \lambda^k.
\end{align*}
The transmitter having the output sequence $ y_{0:k-1} $, sends the receiver which output pattern (e.g. a message from $\{1,\dots, |\mathcal{Y}(k)|\} $ ) was received using the $ n-k $ parity check symbols. Assume that the transmitter sends the parity check symbols with a rate slightly below the zero-error feedback capacity, i.e., $ R=C_{0f}-\delta $, where $ \delta >0 $ is arbitrary small.\footnote{The reason to choose $ \delta $ is to deal with situation when $ C_{0f} $ is achieved when blocklength tends to infinity.}
Therefore, \[C_{0f}-\delta. = \frac{\log |\mathcal{Y}(k)|}{n-k}. \]

Using the upper bound on size of the output, i.e., $ |\mathcal{Y}(k)| \leq (|\mathcal{S}| \beta)\lambda^k $ and rearranging the inequality, gives
\begin{align*}
k &\geq \frac{(C_{0f}-\delta)n-\log(|\mathcal{S}|\beta)}{(C_{0f}-\delta +\log \lambda)}.
\end{align*}

Considering the fact that the total rate of coding is upper-bounded by $ C_{0f} $, we have
\begin{align*}
C_{0f} &\geq \frac{k}{n}\log q \geq \frac{C_{0f}-\delta-\frac{\log(|\mathcal{S}|\beta)}{n}}{C_{0f}-\delta +\log \lambda} \log q.
\end{align*}

Rearranging gives the following.
\begin{align*}
C_{0f} & \geq  \log q-\log \lambda  -\delta \bigg(1-\frac{1}{C_{0f}}\bigg)\log q -\frac{\log(|\mathcal{S}|\beta)}{nC_{0f}} \log q.
\end{align*}

By choosing $ \delta $ small and making $ n $ large, the last two terms disappear and this concludes the proof.

\section{Proof of the Zero-Error Capacity Lower Bound} \label{sec:proofl}
First, we give the following Lemma.
\begin{lem} \label{lemnss}
	Let $ \mathcal{G}(s_0,y_{1:n}) $ be subset of the inputs that can result in output $y_{1:n}$ with initial state $ s_0 $ for the finite-state additive noise channel. The following holds
	\begin{align}
	\alpha \lambda^n \leq |\mathcal{G}(s_0,y_{1:n})| \leq \beta \lambda^n,
	\end{align}
\end{lem}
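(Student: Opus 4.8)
The plan is to reduce this lemma to Lemma~\ref{lem:out} by exhibiting a cardinality-preserving bijection. The key observation is that the channel is additive, $y_i = x_i \oplus z_i$, so modulo-$q$ subtraction is componentwise invertible. Thus, for a fixed initial state $s_0$ and a fixed output sequence $y_{1:n}$, an input sequence $x_{1:n}$ belongs to $\mathcal{G}(s_0,y_{1:n})$ precisely when the unique noise sequence forcing that output, namely $z_{1:n} = y_{1:n}\ominus x_{1:n}$, is a feasible noise realization, i.e. $z_{1:n}\in\mathcal{Z}(s_0,n)$. Conversely, every feasible $z_{1:n}\in\mathcal{Z}(s_0,n)$ yields an admissible input $x_{1:n}=y_{1:n}\ominus z_{1:n}\in\mathcal{G}(s_0,y_{1:n})$.

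First I would make this correspondence explicit: the map $x_{1:n}\mapsto y_{1:n}\ominus x_{1:n}$ sends $\mathcal{G}(s_0,y_{1:n})$ into $\mathcal{Z}(s_0,n)$, with inverse $z_{1:n}\mapsto y_{1:n}\ominus z_{1:n}$, so it is a bijection and hence
\[
|\mathcal{G}(s_0,y_{1:n})| = |\mathcal{Z}(s_0,n)|.
\]
This is the ``dual'' of the bijection used in Lemma~\ref{lem:out}: there the input was held fixed and one counted outputs, whereas here the output is held fixed and one counts inputs; in both cases the additive structure makes the count coincide with $|\mathcal{Z}(s_0,n)|$, the number of feasible state trajectories of length $n$ from $s_0$.

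The second and final step is simply to invoke Lemma~\ref{lem:out}, which already establishes $\alpha\lambda^n \leq |\mathcal{Z}(s_0,n)| \leq \beta\lambda^n$ for the same positive constants $\alpha=v_{\min}/v_{\max}$ and $\beta=v_{\max}/v_{\min}$ coming from the Perron eigenvector. Substituting the cardinality equality above into this bound yields $\alpha\lambda^n \leq |\mathcal{G}(s_0,y_{1:n})| \leq \beta\lambda^n$, as claimed. I do not expect a genuine obstacle here: the only point requiring care is confirming that the subtraction map is well defined and bijective componentwise (immediate for modulo-$q$ arithmetic) and that the resulting $z_{1:n}$ ranges over exactly $\mathcal{Z}(s_0,n)$ rather than some proper subset, which follows because no constraint other than feasibility from $s_0$ is imposed once $y_{1:n}$ is fixed. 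Everything quantitative is inherited verbatim from Lemma~\ref{lem:out}.
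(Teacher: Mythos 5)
Your proposal is correct and follows essentially the same route as the paper: fix $y_{1:n}$, note that $x_{1:n}\mapsto y_{1:n}\ominus x_{1:n}$ is a bijection onto $\mathcal{Z}(s_0,n)$ so that $|\mathcal{G}(s_0,y_{1:n})|=|\mathcal{Z}(s_0,n)|$, and then inherit the bounds from Lemma~\ref{lem:out}. No gaps.
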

where $ \alpha $ and $ \beta $ are constants appeared in \eqref{outlam}.
\begin{proof}
	The subset of the inputs that can result in output $y_{1:n}$ with initial state $ s_0 $, $ \mathcal{G}(s_0,y_{1:n}) $ is defined as the following
	\begin{align*}
	\mathcal{G}(s_0,y_{1:n})=\{ x_{1:n} |x_{1:n} \oplus z_{1:n} = y_{1:n}, z_{1:n} \in \mathcal{Z}(s_0,n)\}.
	\end{align*}
	Fixing $ y_{1:n} $, the mapping $x_{1:n} \to z_{1:n}$ in \eqref{adnoise} is bijective, hence $|\mathcal{G}(s_0,y_{1:n})|=| \mathcal{Z}(s_0,n)|$. Combining it with \eqref{yvsize} yields $ |\mathcal{G}(s_0,y_{1:n})|= |\mathcal{Y}(s_0,x_{1:n})| $. Moreover, Lemma \ref{lem:out} gives the bounds on $ |\mathcal{Y}(s_0,x_{1:n})| $. 
\end{proof}
\begin{figure}[t]
	\centering
	\begin{tikzpicture}
	[->, auto, semithick, node distance=3.4cm]
	\tikzstyle{every state}=[fill=white,draw=black,thick,text=black,scale=1]
	\node[state]   (S1)       {$ S_i=0 $};
	\node[state]   (S2)[right of=S1]   {$ S_i=1 $};
	\path
	(S1) edge[out=160, in=200,looseness=5] node [left=0.1cm] {$ Z_i=0 $}(S1)
	edge[bend left] node [above] {$ Z_i=1 $}  (S2)
	(S2) edge[bend left]  node [below] {$ Z_i=0 $}  (S1);
	\end{tikzpicture}
	\caption{State transition diagram of a noise process in a channel at which no two consecutive errors can happen in the channel.}
	\label{fig:ex2stat}
\end{figure}
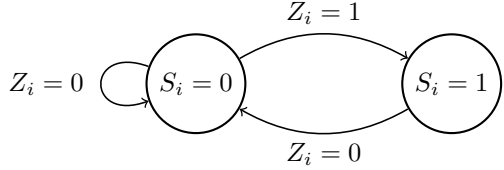
Let $ c(1) \in \mathcal{X}^n $ be the first codeword for which adjacent inputs denoted by $ \mathcal{Q}(c(1)) $. Again, each output sequence is in $\mathcal{Y}_T(c(1)):= \cup_{s_0 \in \mathcal{S}} \mathcal{Y}(s_0,c(1)) $. Hence,
\begin{align}
\mathcal{Q}(c(1))&=\bigcup_{y_{1:n} \in \mathcal{Y}_T(c(1)) } \mathcal{G}(y_{1:n}), \label{q3}
\end{align}
where, $ \mathcal{G}(y_{1:n}):=\bigcup_{s_0 \in \mathcal{S}} \mathcal{G}(s_0,y_{1:n}) $,
which gives 
\begin{align*}
|\mathcal{Q}(c(1))|&\leq \sum_{y_{1:n} \in \mathcal{Y}_T(c(1))}\sum_{s_0 \in \mathcal{S}} |\mathcal{G}(s_0,y_{1:n})|.
\end{align*}
Using Lemma \ref{lemnss}, we have
\begin{align*}
|\mathcal{Q}(c(1))|&\leq|\cup_{s_0 \in \mathcal{S}} \mathcal{Y}(s_0,c(1))|(|\mathcal{S}| \times\beta \lambda^n).
\end{align*}
According to \eqref{outlam}, for any initial state the number of outputs is upper-bounded by $ \beta \lambda^n $. Therefore,
\begin{align*}
|\mathcal{Q}(c(1))|&\leq \big(|\mathcal{S}| (\beta \lambda^n) \big)\times(\beta|\mathcal{S}| \lambda^n) =\big( \beta|\mathcal{S}| \lambda^n \big)^2.
\end{align*}
By choosing non-adjacent inputs as the codebook, results in an error-free transmission. The above argument is true for other codewords, i.e.,
\begin{align*}
|\mathcal{Q}(c(i))|&\leq \big( \beta|\mathcal{S}| \lambda^n \big)^2, i \in \{1,\dots, M \}, 
\end{align*}
where $ M $ is the number of codewords in the codebook such that union of corresponding $\mathcal{Q}(c(i))$ for $i=1,\dots,M,$ covers $ \mathcal{X}^n $. Then,
\begin{align*}
q^n = |\bigcup_{i=1}^M \mathcal{Q} (c(i))| &\leq \sum_{i=1}^{M} |\mathcal{Q} (c(i))|\leq M 	\times \big( \beta|\mathcal{S}| \lambda^n \big)^2.		
\end{align*}
A a result, the number of distinguishable inputs is lower bounded by $M \geq q^n/( \beta|\mathcal{S}| \lambda^n )^2$.
Therefore, according to zero-error capacity definition
\begin{align*}
C_0&\geq \log \frac{q}{(\beta |\mathcal{S}|)^2 (\lambda)^{2n}} =\log q-2\log \lambda - \frac{2}{n} \log (\beta |\mathcal{S}|).
\end{align*}
If $n$ is large, the last term vanishes and proves the lower bound in \eqref{c0l}.
\section{Examples} \label{sec:examps}
Here, we provide some examples, and for them, compute $ C_{0f} $ explicitly. Examples \ref{2stat} and \ref{3stat} consider channels with isolated and limited runs of errors. In Example \ref{estat} we consider a Gilbert-Elliot channel. Moreover, for examples \ref{2stat} and \ref{3stat}, we investigate that minimum value of ordinary feedback capacity $ C_f $ over the transition probabilities and observe how far is this natural upper bound from the zero-error feedback capacity.\smallskip

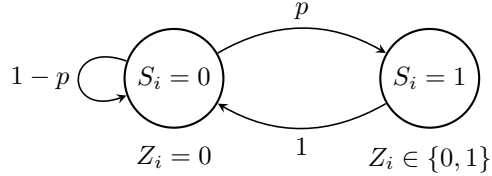
\begin{figure}[t]
	\centering
	\begin{tikzpicture}
	[->, >=stealth, auto, semithick, node distance=3.4cm]
	\tikzstyle{every state}=[fill=white,draw=black,thick,text=black,scale=1]
	\node[state] (A1)   {$ S_i=0 $};
	\node[] (state0)[below=0.1cm of A1] {$ Z_i=0 $};
	\node[state]   (A2)[right of=A1]   {$S_i=1 $};
	\node[] (state1)[below=0.1cm of A2] {$  Z_i \in \{0,1\} $};
	\path
	(A1) edge[out=160, in=200,looseness=5] node [left] {$ 1-p $} (A1)
	edge [bend left] node [above] {$ p $}  (A2)
	(A2) edge[bend left]  node [below] {$ 1$}  (A1);
\end{tikzpicture}
\caption{Markov chain for channel states in Example \ref{estat}.}
\label{fig:estat}
\end{figure}

\begin{ex}\label{2stat}
	Consider a channel with no two consecutive errors (Fig.~\ref{fig:ex2stat}). If $ q = 2 $ then $ C_0=C_{0f}=0 $. Whilst, if $ q \geq 3 $ it has a zero-error feedback capacity of $ C_{0f}=\log q-\log(\frac{1+\sqrt{5}}{2}) $ bit/use where $ \frac{1+\sqrt{5}}{2} $ is known as the golden ratio.
	
	Moreover, assuming Markovianity with the transition probability $ P(S_{i+1}=1|S_i=0)=p $, the ordinary feedback capacity is $ C_f (p)= \log q-\frac{H(p)}{1+p}$ from \eqref{ccf}, where $ H(.) $ is the binary entropy function. It turns out that 
	$ C_{0f}= \min_{p\in (0,1)} C_f(p) $.
\end{ex} \smallskip

\begin{ex}\label{3stat}
	 The example of Fig.~\ref{fig:examp} represents a channel with no more than two consecutive errors, having adjacency matrix
\begin{align*}
\mathcal{A}=\begin{bmatrix}
1&1&0\\
1&0&1\\
1&0&0
\end{bmatrix}.
\end{align*}
	 If $ q = 2 $ then $ C_0=C_{0f}=0 $ and if $ q =3 $ it has $ C_{0f} = 0.7058$. 
	 
	 If the channel states are Markov with transition probabilities $ P(S_{i+1}=1|S_i=0)=p $ and $ P(S_{i+1}=2|S_i=1)=r $, it can be shown that\\ $ \min_{p,r\in (0,1)}C_f (p,r)= 0.7935 > C_{0f} $.
\end{ex}\smallskip

\begin{ex}\label{estat} Consider a Gilbert-Elliot channel with input alphabet of size $ q=5 $ and two states (Fig.~\ref{fig:estat}). When the state $ S_i=0 $ the channel is error-free, i.e., $ P(Z_i\neq 0|S_i=0)= 0 $ and when state $ S_i=1 $ it acts like a noisy type-writer channel (Fig.~\ref{fig:pen}) which is also known as the Pentagon channel \cite{shannon1956zero}. In this state, the probability of error for any input symbol is $ P(Z_i=1|S_i=1)= r $ and thus the probability of error-free transmission is $ P(Z_i=0|S_i=1)=1-r $. Figure~\ref{fig:estat} shows this channel's state transition diagram. However, this channel does not fit Definition \ref{def:fanchannel}, because outgoing edges are not associated with unique noise values. This reflects the fact that the noise process is a hidden Markov model, not a Markov chain, and the same state sequence can yield multiple noise sequences. 
\end{ex} 	
Nonetheless, in the following we show an equivalent representation of this channel compatible with Definition \ref{def:fanchannel}. The resultant model (shown in Fig.~\ref{fig:eq2}) is a state machine that produces the same set of noise sequences, where the edges define the noise values in each transmission.
\begin{figure}
	\centering
	\begin{tikzpicture}[->, >=stealth, auto, semithick, node distance=1cm]
	\tikzstyle{every state}=[fill=white,thick,text=black,scale=1]
	\node[]    (S1)  				{0};
	\node[]    (S2)[below of=S1]   {1};
	\node[]    (S3)[below of=S2]   {2};
	\node[]    (S4)[below of=S3]   {3};
	\node[]    (S5)[below of=S4]   {4};
	\node[]    (S6)[right of=S1,right=2cm]   {0};
	\node[]    (S7)[below of=S6]   {1};
	\node[]    (S8)[below of=S7]   {2};
	\node[]    (S9)[below of=S8]   {3};
	\node[]    (S10)[below of=S9]   {4};
	
	\path
	(S1.0) edge   (S6)
	edge    (S7.180)
	(S2.0) edge  (S7)
	edge  (S8.180)
	(S3.0) edge   (S8)
	edge    (S9.180)
	(S4.0) edge  (S9)
	edge  (S10.180)
	(S5.0) edge   (S10)
	edge    (S6.180);
	\end{tikzpicture}
	\caption{Pentagon channel.}
	\label{fig:pen}
\end{figure}
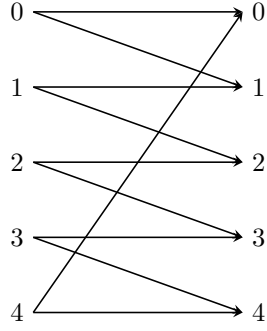
Note that if the channel is in state $ S_i=0 $, the noise can only take value $ Z_i=0 $, but in state $ S_i=1 $, the noise $ Z_i\in \{0,1\} $, thus $ Z_i\in \{0,1\} $ at all times. In the sequel, we show that 
\begin{align}
P(Z_{i+1}=1|Z_i=1,z_{1:i-1})&=0,\label{zz11}\\
P(Z_{i+1}=0|Z_i=0,z_{1:i-1})&>0,\label{zz00}\\
P(Z_{i+1}=1|Z_i=0,z_{1:i-1})&>0,\label{zz10}
\end{align}
whenever the conditioning sequence of $ Z_i=j,Z_{1:i-1}=z_{1:i-1}, j\in \{0,1\} $ occurs with non-zero probability. Therefore, irrespective of past noises the state machine shown in Fig.~\ref{fig:eq2} can produce all noise sequences that occur with nonzero probability. It should be stressed that this noise process may not be a stochastic Markov chain, however, it is a topological Markov chain \cite[Ch.2]{lind1995introduction}. 
First, note by inspection of Fig.~\ref{fig:estat} that the noise process has zero probability of taking value $ 1 $ twice in a row. Thus 
$ P(Z_{i+1}=1, Z_i =1, z_{1:i-1}) = 0 $.
Using Bayes rule, it then follows that
\begin{align*}
P(Z_{i+1}=1|Z_i=1,z_{1:i-1}) &=0,
\end{align*}
whenever $ P(Z_i=1, z_{1:i-1}) >0 $.

Next we show \eqref{zz00}-\eqref{zz10}. Let $z_{1:i-1}$ be any past noise sequence such that $P(Z_i=0, z_{1:i-1}) >0$.Therefore, $\exists s_{1:i} $ such that 
\begin{align}
P(Z_i=0,z_{1:i-1},s_{1:i}) &=P(Z_i=0|s_i)P(z_{1:i-1},s_{1:i})>0.\label{con0}
\end{align}
From Fig.~\ref{fig:estat}, $ P(Z_{i+1}=0,Z_i=0|s_i=j)>0, j\in \{0,1\} $. Thus
\begin{align*}
P(Z_{i+1}=0,Z_i=0,z_{1:i-1},s_{1:i})&=P(Z_{i+1}=0,Z_i=0|s_i)P(z_{1:i-1},s_{1:i})>0,
\end{align*}
since the second factor on the RHS is positive, by \eqref{con0}. Therefore, $ P(Z_{i+1}=1, Z_i=0, z_{1:i-1}) > 0 $, and \eqref{zz00} holds. Now, we show \eqref{zz10}. If $ S_i=0 $, it can be shown from Fig.~\ref{fig:estat} and the noise probabilities that 
\begin{align}
P(Z_{i+1}=1,Z_i=0|S_i=0)&=rp>0. \label{z10rp}
\end{align}
Therefore, 
\begin{align*}
P(Z_{i+1}=1,Z_i=0,z_{1:i-1}) &\geq P(Z_{i+1}=1,Z_i=0,z_{1:i-1},S_i=0, s_{1:i-1})\\
&=P(Z_{i+1}=1,Z_i=0|S_i=0)P(S_i=0,z_{1:i-1},s_{1:i-1})\\
&=rp\,P(S_i=0,z_{1:i-1},s_{1:i-1})>0.
\end{align*}
Note from Fig.~\ref{fig:estat} that $ P(S_i=0|s_{i-1})>0 $. Thus,
\begin{align*}
P(S_i=0,z_{1:i-1},s_{1:i-1})&=P(S_i=0|s_{i-1})P(z_{1:i-1},s_{1:i-1})>0.
\end{align*}
Consequently, \eqref{zz11}-\eqref{zz10} hold yielding the state machine in Fig.~\ref{fig:eq2}. Note that, $ \hat{S}_i=0 $ corresponds to $ Z_i=0 $ and  $ \hat{S}_i=1 $, to $ Z_i=1 $.

\begin{figure}[t]
	\centering
	\begin{tikzpicture}
	[->, auto, semithick, node distance=3.4cm]
	\tikzstyle{every state}=[fill=white,draw=black,thick,text=black,scale=1]
	\node[state]   (S1)       {$ \hat{S}_i=0 $};
	\node[state]   (S2)[right of=S1]   {$ \hat{S}_i=1 $};
	\path
	(S1) edge[out=160, in=200,looseness=5] node [left=0.1cm] {$ Z_{i}=0 $} (S1)
	edge[bend left]  node [above] {$ Z_{i}=1 $}  (S2)
	(S2) edge[bend left]  node [below] {$Z_{i}=0 $}  (S1);
	\end{tikzpicture}
	\caption{State machine generating the noise sequence of Example \ref{estat}.}
	\label{fig:eq2}
\end{figure}
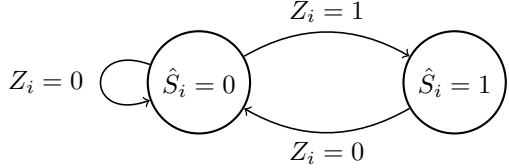
Now, we can use the results of Theorem \ref{thm:zero-error}, to get
 \[ \log5-2\log\bigg(\frac{1+\sqrt{5}}{2}\bigg) \leq C_0 \leq C_{0f} = \log5-\log\bigg(\frac{1+\sqrt{5}}{2}\bigg). \] 
 This shows that the zero-error feedback capacity of some channels with different structure than Definition \ref{def:fanchannel}, such as time-varying state transmissions (non-homogeneous Markov chains) and even transitions that depend on previous transmissions can be explicitly obtained.
	\section{Conclusion} \label{sec:conclusions}
	We introduced a formula for computing the zero-error feedback capacity for a class of additive noise channels without state information at the decoder and encoder. This reveals a close connection between the topological entropy of the underlying noise process and the zero-error communication.
	Moreover, a lower bound on zero-error capacity (without feedback) was given based on the topological entropy. 
	
	Future work includes extending these results to a more general class of channels.
\end{spacing}
\bibliographystyle{IEEEtran}
\bibliography{ieeetranb}
\end{document}